\begin{document}
%

\author{\IEEEauthorblockN{Yaron Singer \thanks{This work was done while the author was at UC Berkeley and supported by a Microsoft Research Fellowship.  This paper appeared in the Proceedings of the 51st Annual IEEE Symposium on Foundations of Computer Science (FOCS 2010).}}
\IEEEauthorblockA{School of Engineering and Applied Sciences\\
Harvard University\\
 yaron@seas.harvard.edu}}

\title{Budget Feasible Mechanisms}

\setlength{\textwidth}{6.5in} \setlength{\textheight}{9in}
\setlength{\oddsidemargin}{0in} \setlength{\evensidemargin}{0in}
\setlength{\hoffset}{0in} \setlength{\voffset}{0in}
\setlength{\marginparsep}{0in} \setlength{\marginparwidth}{0in}
\setlength{\topmargin}{0in} \setlength{\headheight}{0in}
\setlength{\headsep}{0in}
\renewcommand{\baselinestretch}{.96}

\bibliographystyle{plain}

\newtheorem{theorem}{Theorem}[section]
\newtheorem{lemma}[theorem]{Lemma}
\newtheorem{proposition}[theorem]{Proposition}
\newtheorem{definition}[theorem]{Definition}

\date{}
\maketitle

\begin{abstract}
We study a novel class of mechanism design problems in which the
outcomes are constrained {\em by the payments}.  This basic class
of mechanism design problems captures many common economic situations,
and yet it has not been
studied, to our knowledge, in the past.  We focus on the case of
procurement auctions in which sellers have private costs, and the
auctioneer aims to maximize a utility function on subsets of items,
under the constraint that the sum of the payments provided by the
mechanism does not exceed a given budget.  Standard mechanism design
ideas such as the VCG mechanism and its variants are not applicable
here.  We show that, for general functions, the budget constraint
can render mechanisms arbitrarily bad in terms of the utility of the
buyer.  However, our main result shows that for the important class of 
submodular functions, a bounded approximation ratio is achievable.  Better
approximation results are obtained for subclasses of the submodular functions.  
We explore the space of budget feasible mechanisms in other domains and give a characterization 
under more restricted conditions.
\end{abstract}

\section{Introduction}

Consider the following familiar problems:

\begin{itemize}
\item{\bf Knapsack:} Given a budget $B$ and a set of items $\mathcal{N}=\{1,\ldots,n\}$, each with cost $c_i$ and value $v_i$, find a subset of items $S$ which maximizes $\sum_{i\in S}v_i$ under the budget constraint.
\item{\bf Matching:} Given a budget $B$ and a bipartite graph, with set of edges $\mathcal{N}=\{e_1,\ldots,e_n\}$ each with cost $c_e$ and value $v_e$, find a legal matching $S$ which maximizes  $\sum_{e\in S}v_e$ under the budget constraint.
\item{\bf Coverage:} Given a budget $B$ and subsets $\mathcal{N} = \{T_1,\ldots,T_n\}$ of some ground set, each with a cost $c_i$ find a subset $S$ which maximizes $|\cup_{i \in S} T_i |$ under the budget constraint.

\end{itemize}

Three much studied, and much solved, optimization problems. However,
suppose that the elements of $\mathcal{N}$ are not combinatorial
objects, but {\em strategic agents} with private \emph{costs}.
The above problems then capture natural economic interactions:
Knapsack, for example, models a simple procurement auction, while
Coverage may model the problem of maximizing exposure of an advertising
campaign under a budget.  These are precisely the kinds of
economic interactions we wish to study here: reverse auctions with
private costs, with the goal of optimizing the auctioneer's value.

At first glance it may seem that the problem we describe falls
within the scope of familiar domains.
However, closer inspection reveals a new dimension of
difficulty: the budget constraint applies not to the \emph{costs}
but to the \emph{payments the mechanism uses to support
truthfulness}.  We need mechanisms whose sum of payments never
exceeds the given budget. \newline

\begin{itemize}
\item[]{\em Can we design mechanisms that implement these
intended economic interactions in the most favorable way to the
auctioneer without their payments exceeding the budget?}\newline
\end{itemize}

Mechanism design is by now a very mature discipline  
and the recent injection of
computational thought has helped develop it even further, and in new
and forward-looking directions~\cite{NR01}.  Procurement auctions, 
introduced to computer scientists already in~\cite{NR01}, were at first 
studied under {\em utilitarian} objectives, seeking to optimize social welfare~\cite{NR01,FPSS02}.
More recently procurement auctions have been studied under the non-utilitarian framework of {\em frugality}~\cite{CFHK08,AT07,KKT05, ESS04,T03}
--- essentially, payment optimization in reverse auctions.  

These situations still fall within classical mechanism
design theory, where {\em the set of possible outcomes is a priori
fixed and publicly known.}  By ``set of all possible outcomes'' here
we mean the set of all possible allocations, with payments projected
out.  In other words, there is a rich class of allocations,
independent of payments, each of which is realizable by a truthful
mechanism.  In the three introductory examples, however, the set of possible
outcomes is {\em not} fixed {\em or} publicly known: It depends
crucially on the participants' private information, ultimately on
the mechanism's payments.  It is this peculiarity that makes these
three problems difficult, and places them at a blind spot of
mechanism design.  

\subsection*{Budget Feasible Mechanisms}
We say a truthful mechanism is \emph{budget feasible} if its payments do not exceed a given budget.  
In single parameter domains, where each agent's private information is a single number, designing truthful 
mechanisms often reduces to designing monotone allocation rules, since payments can be computed via binary 
search~\cite{MN08}.  This no longer holds when the payments are restricted by a budget:
Designing a budget feasible allocation rule requires understanding its payments, which in-turn depend 
on the allocation rule itself.  Not surprisingly, it seems that budget feasible mechanisms are very tricky to find.

\paragraph*{The VCG mechanism does not work}
Consider a simple Knapsack instance where all items have identical values,
and except for one item whose cost equals the budget, all items have small costs.
The VCG mechanism will choose the $n-1$ small-cost agents, paying the budget to each. 
Thus, while this mechanism returns the optimal solution with total {\em
cost} within the budget, the total \emph{payment} will be way over
budget (in fact, $(n-1)$ times the budget).

\paragraph*{In general, nothing can work}  
Consider a slight variation of the above problem, in which all items have small costs, and identical values \emph{as long as a particular item 
$i$ is in the solution}, and otherwise all have value 0 (for example, think of $i$ as a corkscrew and the rest of the items as bottles of wine).  How well can a budget feasible mechanism do here?  If the mechanism has a bounded approximation ratio
it must always guarantee to include $i$ in its solution.  This however implies that as long as $i$ declares a cost that is less than the mechanism's budget, the mechanism includes her in the solution.  A truthful mechanism must therefore surrender its entire budget to $i$.  This of course results in an unbounded approximation ratio.

\subsection*{Our Results}

The question, then, is:  Which classes of functions 
have budget feasible mechanisms with good approximation properties?
{\em Our main result is a randomized constant factor budget feasible mechanism
that is universally truthful for the quite general, and important, class of nondecreasing 
submodular functions} (Theorem~\ref{main}).  

For a slightly broader class, that of fractionally subadditive functions, 
we show that computational constraints dictate a lower bound.  
As shown in the simple example above, superadditive functions bring out the clash between truthfulness and the budget constraint.  On a positive note, the three problems in the beginning of the section correspond to 
subclasses (additive, OXS, and coverage) of submodular maximization problems.
We show improved approximations for these problems and other special cases.  
We further explore the space of budget feasible mechanisms, showing several impossibilities as well as a characterization under more restricted conditions.

\subsection*{Related Work}
\paragraph*{Budgets in auctions} Budgets came under scrutiny in auction theory
\cite{DLN08,BLM08,FMPS07,BCIMS05} after observing behavior of
bidders in online automated auctions \cite{Google08}, as well as in
spectrum auctions where bidding is performed by groups of strategic
experts \cite{BLM08}.  While these pioneering works highlight the
significance and challenges that budgets introduce to mechanism
design, they relate to an entirely different concept than the one we
study here.  While these works study the impact of budgets on
strategic \emph{bidders}, our interest is to explore the budget's
effect on the \emph{mechanism}.  These papers, however, do point out
the complexity induced by budget constraints in mechanism design, and the
need for approximations.

\paragraph*{Frugality} In recent years a theory of {\em frugality} has
been developed with the goal of providing mechanisms for procurement auctions
that admit minimal payments~\cite{CFHK08,AT07,KKT05,ESS04,T03} .  Budget feasibility and frugality are complementary concepts.
Frugality is about buying a feasible solution at minimum cost --- there are
no preferences among the solutions, and the goal is to minimize
payments.  In our setting we have no preferences among payments ---
as long as they are below the budget --- but we do care about
the value of the solutions.  The two approaches are complementary
also in another important sense:  in our last section we show that
for all the problems studied in the frugality literature there are
no budget feasible mechanisms.

\paragraph*{Cost Sharing} Somewhat conceptually closer to our work is the
subject of cost sharing, in which agents have private values for a
service, there is a nondecreasing cost for allocating the service to
agents, and the goal is to maximize the agents' valuations under the
cost (see~\cite{JM07} for a survey). The proportional share mechanism 
we study in this paper is inspired by~\cite{MS01} and~\cite{RS06}.
The relationship between cost sharing and our setting, however, is quite limited: we are not aiming 
to optimize a function of the agents private information under a public cost function, 
but rather optimize a public function under constraints dictated by agents' private information and a fixed budget.  
Here, our goal is non-utilitarian --- we aim to maximize the \emph{buyer's} demand,
which is independent of the agents' utilities.

\paragraph*{Submodular Maximization} From a pure algorithmic perspective, even under a cardinality constraint, maximizing a submodular function is well known to be NP-hard, and an $1-1/e$ approximation ratio can be achieved by greedily taking items based on their marginal contribution~\cite{FNW78}.   When items have costs, variations of greedy on marginal contribution normalized by cost can achieve constant factor approximations, and even the optimal $1-1/e$ ratio \cite{KMN99,KG05}.  For submodular maximization problems that can be expressed as integer programs, rounding solutions of linear and nonlinear programs can, in some cases, achieve the optimal constant approximation ratio~\cite{AS04}.  

\subsection*{Paper Organization}
After the necessary definitions in Section~\ref{sec:model}, we
present a mechanism for the class of symmetric submodular functions
(Section~\ref{sec:sym}); this special case simplifies the problem enormously
and facilitates the introduction of ideas and intuition for the
general submodular case. Our main result for submodular functions is developed
in Section~\ref{sec:SM}.  Finally, in Section~\ref{sec:exp}, we further discuss the space 
of budget feasibility, improved approximations, impossibility results and characterization.

\section{The Model}\label{sec:model}

In a {\em budget-limited reverse auction} we have a
set of items $[n]=\{1,\ldots,n\}$, and a single buyer.  Each
item $i \in [n]$ is associated with a cost $c_{i}\in \mathcal{R}_{+}$,
while the buyer has a budget $B \in \mathcal{R}_{+}$  and a demand valuation function
$V:2^{[n]} \to \mathcal{R_+}$.  In the {\em full information} case, costs are common knowledge,
and the objective is to maximize the demand function under the budget, i.e. find the subset $S \in \{T | \sum_{i \in T}c_{i} \leq B\}$
for which $V(S)$ is maximized.

We focus on the {\em strategic case}, in which each item is held by a unique agent and
costs are \emph{private}.  The budget and demand function of the buyer are common knowledge.
A solution is a subset and a payment vector, and the objective is to maximize the demand function 
while the \emph{payments} (not costs) are within the budget.  
More formally, a mechanism $\mathcal{M}=(f,p)$ consists of an allocation function $f:\mathcal{R}_+^{n} \to 2^{[n]}$
and a payment function $p:\mathcal{R}_+^{n} \to \mathcal{R}_+^{n}$.
The allocation function $f$ maps a set of $n$ bids to a subset
$S=f(c_1,\ldots,c_n)\subseteq [n]$.  The payment function $p$
returns a vector $p_{1},\ldots,p_{n}$ of payments to the agents. We
shall often omit the arguments $c_1,\ldots,c_n$ when writing $f$ and
$p$.  We shall denote by $s_{1},\ldots,s_{n}$ the characteristic
vector of $S$, that is, $s_{i} = 1$ iff $i \in S$.  As usual, we seek normalized ($s_i=0$ implies $p_i=0$), individually rational ($p_{i} \geq s_i\cdot c_{i}$) mechanisms with no positive transfers ($p_i\geq 0$).  As it is common in algorithmic mechanism design, our goal is
manifold. We seek mechanisms that are:

\begin{enumerate}
\item {\bf Truthful,}  that is, reporting the true costs is a dominant strategy for sellers.
Formally, a mechanism $\mathcal{M}=(f,p)$ is {\em truthful} ({\em incentive
compatible}) if for every $i \in [n]$ with cost $c_{i}$ and bid $c'_{i}$, and every set of bids by $[n]\setminus\{i\}$
we have $p_{i} - s_{i}\cdot
c_{i} \geq p'_{i} - s'_{i}\cdot c_{i}$, where $(s_{i},p_{i})$ and
$(s'_{i},p'_{i})$ are the allocations and payments when the bidding
is $c_{i}$ and $c'_{i}$, respectively.  A mechanism that is a randomization over truthful mechanisms is \emph{universally truthful}.

\item {\bf Computationally Efficient.} The functions $f$ and $p$ can be computed
in polynomial time.  In cases where the demand function requires exponential data to be represented (as in the general submodular case), we take the common ``black-box'' approach and assume the buyer has access to an oracle which allows evaluating any subset $S \subseteq [n]$, with polynomially many queries.  Such queries are known as value queries. This is a weaker model than ones allowing demand or general queries (see~\cite{BN05} for a definition).  Since our main interest here is algorithmic, this strengthens our results.

\item {\bf Budget Feasible.}  Importantly, we require that a
mechanism's allocation rule and payments do not
exceed the budget: $\sum_{i}p_{i}s_{i} \leq B$.  We call such mechanisms \emph{budget feasible}.

\item {\bf Approximation.}  We want the allocated subset to yield the highest possible value for the buyer.
For $\alpha \geq 1$ we say that a mechanism is  $\alpha$-approximate if the mechanism allocates to a
set $S$ such that $OPT(c,\mathcal{N},B)\leq \alpha V(S)$, where $OPT(c,\mathcal{N}, B)$ denotes the value of full information optimal solution over a set of agents $\mathcal{N}$ with cost vector $c$ and budget $B$.  As usual, when dealing with randomization we seek mechanisms that yield constant factor approximations in expectation. 

\end{enumerate}

This is a {\em single parameter} mechanism design problem, in that each bidder has only one private
value.  We shall repeatedly rely on Myerson's
well-known characterization
\footnote{Note that although there is a budget constraint on the payments, Myerson's characterization applies to our setting as well.  Due to the characterization, we know that the allocation rule determines the payment function.  The budget constraint can therefore be viewed as a property of the allocation rule alone.}:\newline

\begin{theorem}[\cite{M81}]
In single parameter domains a normalized mechanism $\mathcal{M}=(f,p)$ is
truthful iff:
\begin{itemize}
\item[(i)] {\bf $f$ is monotone:}  $\forall i \in[n]$, if $c'_i\leq c_i$ then $i\in f(c_i,c_{-i})$ implies $i\in f(c'_i,c_{-i})$ for every $c_{-i}$;
\item[(ii)] {\bf winners are paid threshold payments:} payment to each winning bidder is $\inf{\{c_i: i\notin f(c_i,c_{-i})\}}.$
\end{itemize}
\end{theorem}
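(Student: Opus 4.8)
The plan is to fix a single agent $i$ together with a profile $c_{-i}$ of the other agents' bids, and to study the two scalar functions $b \mapsto s_i(b)$ and $b \mapsto p_i(b)$ induced by varying $i$'s own bid $b$ while holding $c_{-i}$ fixed. Because this is a single-parameter domain, dominant-strategy truthfulness is equivalent to the assertion that, for \emph{every} fixed $c_{-i}$ and every true cost $c_i$, the report $b = c_i$ maximizes the utility $p_i(b) - s_i(b)\cdot c_i$ over all $b$. I would prove the two directions of the ``iff'' separately.

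For sufficiency, assume $f$ is monotone and winners are paid the threshold $\theta = \inf\{c : i \notin f(c,c_{-i})\}$. Monotonicity forces $s_i(b) = 1$ for $b < \theta$ and $s_i(b) = 0$ for $b > \theta$, and normalization gives $p_i(b) = \theta$ on the winning region and $0$ on the losing region. I would then split on the true cost: if $c_i < \theta$ the truthful bid wins with utility $\theta - c_i \ge 0$, any other winning bid yields the same value, and any losing bid yields $0$; if $c_i > \theta$ the truthful bid loses with utility $0$, while any winning deviation yields $\theta - c_i < 0$. In both cases truthful reporting is optimal (the boundary $c_i = \theta$ gives $0$ either way), so the mechanism is truthful.

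For necessity, assume the mechanism is truthful. The first step is a \emph{payment-invariance} observation: if $s_i(b_1) = s_i(b_2)$ then $p_i(b_1) = p_i(b_2)$, obtained by writing the truthfulness inequality with true cost $b_1$ deviating to $b_2$, then with the roles reversed, and cancelling the equal allocation terms. Hence, for fixed $c_{-i}$, the payment equals a single constant $\rho$ on the winning region and $0$ on the losing region. The second step is monotonicity: if $i$ won at $b$ yet lost at some $b' < b$, then truthfulness at true cost $b'$ (declining the winning bid $b$) forces $\rho \le b'$, while truthfulness at true cost $b$ (declining the losing bid $b'$) forces $\rho \ge b$, contradicting $b' < b$. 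The final step pins down the payment: taking winning bids $b \uparrow \theta$ yields $\rho \ge \theta$, and taking losing bids $b' \downarrow \theta$ yields $\rho \le \theta$, so $\rho = \theta$, exactly the threshold payment.

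The main obstacle is the behavior precisely \emph{at} the threshold, where the map from a continuous bid to the allocation $\{0,1\}$ need not be continuous and the infimum defining $\theta$ need not be attained. One must therefore argue with winning bids approaching $\theta$ from below and losing bids approaching from above, so that the constant $\rho$ equals the infimum regardless of how $f$ breaks the tie at $\theta$; individual rationality and normalization are what keep these one-sided limits mutually consistent. I would close by recording the consequence emphasized in the footnote: since part (ii) makes each winner's payment a deterministic function of the monotone allocation rule alone, the budget-feasibility requirement $\sum_i s_i p_i \le B$ becomes a property of $f$ by itself, which is the form in which the characterization will be invoked in the sequel.
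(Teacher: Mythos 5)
Your proposal is correct, but there is nothing in the paper to compare it against: the paper does not prove this statement at all. It is stated as a citation of Myerson's 1981 result (hence the \cite{M81} tag), and the authors' only original contribution here is the footnote observing that the characterization still applies when a budget constraint is imposed, since the budget can then be read as a restriction on the allocation rule alone. Your argument is the standard proof of Myerson's lemma for single-parameter domains --- payment invariance on each allocation level, then monotonicity by the two crossed truthfulness inequalities, then pinning the constant winning payment to the threshold by one-sided limits --- and each step checks out, including the careful treatment of an unattained infimum at the threshold. The one caveat worth recording is the degenerate case where the losing set $\{c_i : i \notin f(c_i,c_{-i})\}$ is empty, i.e.\ bidder $i$ wins no matter what she reports: then the infimum is $+\infty$ and no finite threshold payment exists. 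This is not a defect of your argument so much as a boundary case the standard statement glosses over, but it is precisely the case that drives the paper's impossibility results (e.g.\ the ``corkscrew'' example and the hiring-a-team lower bound), where an agent who must always be allocated forces the mechanism to pay out its entire budget; so in this paper's context the caveat deserves an explicit sentence rather than silence.
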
~

\section{Symmetric Submodular Functions}\label{sec:sym}

We now introduce a subclass of submodular functions which is devoid of many of the intricacies of the general case.
It will serve as an exposition of the basic ideas, and will help understand the difficulties in the
general case.    

We say a set function is {\em symmetric} if it only depends on the
cardinality of the set, rather than the identity of the
items it holds. Symmetric submodular functions (also called
\emph{downward sloping}) were used by Vickrey 
in his seminal work on multi-unit auctions~\cite{V61}.  They have a very simple structure:\newline

\begin{definition}
A function $V:2^{[n]}\to \mathcal{R_{+}}$ is symmetric submodular if there exist $r_1\geq
\ldots \geq r_n\geq 0$, such that $V(S) = \sum_{i =1}^{|S|} r_i$.
\end{definition}~

Consider the following allocation rule $f_{\cal M}$:  Sort the $n$ bids so that $c_1\leq c_2\leq\ldots\leq c_n$, and consider
the largest $k$ such that $c_k\leq {B / k}$.  That is, $k$ is the place where the curve of the increasing costs 
intersects the hyperbola $B / k$.  The set allocated here is $\{1,2,\ldots,k\}$.  That is, 
$f_{\cal M}=\{1,2,\ldots, k\}$.  This is obviously a monotone allocation rule:  an agent cannot be excluded when 
decreasing her bid.  In the Appendix we show that paying each agent $\theta_i = \min\{B/k, c_{k+1}\}$ results in a truthful mechanism.\footnote{It is rather interesting that the second term is needed; we show in the Appendix that the mechanism breaks down in its absence.}

Observe that this allocation rule has the property we seek: summing over the payments that support truthfulness satisfies the budget constraint.    
Hence this gives us a budget feasible mechanism. 
Importantly, this is also a good approximation of the optimum solution:\newline

\begin{theorem}\label{thm:ds}
The above mechanism has approximation ratio of two.
\end{theorem}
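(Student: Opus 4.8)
The plan is to first pin down the structure of the full-information optimum $S^*$, then compare its value to the quantity $\sum_{i=1}^{k} r_i$ delivered by the mechanism through a counting argument on cardinalities. The underlying intuition is that the mechanism's cutoff $k$ cannot be much smaller than the optimal cardinality, and because marginal values are nonincreasing, losing a bounded number of items costs at most a factor of two in value.

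First I would observe that since $V(S)=\sum_{i=1}^{|S|} r_i$ depends only on $|S|$ and is nondecreasing in $|S|$ (the $r_i$ are nonnegative), the optimal full-information solution simply buys as many of the cheapest items as the budget allows. Sorting costs so that $c_1 \le \cdots \le c_n$, this means $S^* = \{1,\ldots,m^*\}$, where $m^*$ is the largest $m$ with $\sum_{i=1}^{m} c_i \le B$, and $V(S^*) = \sum_{i=1}^{m^*} r_i$. This reduces the theorem to comparing two prefix sums of the nonincreasing sequence $(r_i)$, namely establishing $\sum_{i=1}^{m^*} r_i \le 2\sum_{i=1}^{k} r_i$.

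The key step, and the heart of the argument, is the cardinality bound $m^* \le 2k$. On one side, since $c_i \le c_k \le B/k$ for all $i \le k$, the mechanism's set has total cost $\sum_{i=1}^{k} c_i \le B$, so $k \le m^*$. For the other direction I would exploit the maximality of $k$: for every index $j > k$ we have $c_j > B/j$. If $m^* > k$, then in particular $c_{k+1} > B/(k+1)$, and since costs are sorted, every item $i$ with $k < i \le m^*$ satisfies $c_i \ge c_{k+1} > B/(k+1)$. Summing over these indices and using $\sum_{i=k+1}^{m^*} c_i \le B$ gives $(m^*-k)\cdot B/(k+1) < B$, hence $m^*-k < k+1$, that is, $m^* \le 2k$.

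Finally I would convert the cardinality bound into the value bound using monotonicity of the $r_i$. Writing $V(S^*)=\sum_{i=1}^{k} r_i + \sum_{i=k+1}^{m^*} r_i$, the tail is bounded by $(m^*-k)\,r_{k+1} \le k\,r_{k+1} \le k\,r_k \le \sum_{i=1}^{k} r_i$, where the first inequality uses $r_i \le r_{k+1}$ for $i>k$, the second uses $m^*-k \le k$, the third uses $r_{k+1}\le r_k$, and the last uses $k\,r_k \le \sum_{i=1}^{k} r_i$. Adding the two prefix contributions yields $V(S^*) \le 2\sum_{i=1}^{k} r_i = 2\,V(f_{\cal M})$. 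The only remaining subtlety is the degenerate case $k=0$ (when even $c_1 > B$), where $m^*=0$ as well and both values vanish, so the bound holds trivially. I expect the cardinality bound $m^* \le 2k$ to be the crux; once it is in hand, the value comparison is routine given that the $r_i$ are nonincreasing.
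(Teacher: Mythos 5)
Your proof is correct and takes essentially the same route as the paper's: both reduce the theorem to a cardinality comparison between the mechanism's cutoff $k$ and the optimal (greedy prefix) size, established by an averaging argument on the tail costs, whose sum is at most $B$, followed by monotonicity of the $r_i$ to convert cardinality into value. The only difference is cosmetic: you prove $m^* \le 2k$ directly from $c_i > B/(k+1)$ for $i > k$, whereas the paper argues by contradiction at index $\lceil \ell/2 \rceil$; your version is in fact a bit more careful about rounding and the degenerate case $k=0$.
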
~

\begin{proof}
Observe that the optimal solution is obtained
by greedily choosing the lowest-priced items until the
budget is exhausted.  By the downward sloping property, to
prove the result  it suffices to show that the mechanism returns at least 
half of the items in the greedy solution.  Assume for purpose of contradiction that the optimum solution has $\ell$ items,
and the mechanism returns less than $\ell / 2$.
It follows that  $c_{\lceil \ell/2 \rceil}
> {2B}/{\ell} $.   Note however, that this is impossible since we
assume that $c_{\lceil \ell/2 \rceil} \leq \ldots \leq c_{\ell}$, and
$\sum_{i=\lceil \ell/2 \rceil}^{\ell}c_{i} \leq B$ which implies that
$c_{\lceil \ell/2 \rceil} \leq {2B}/{\ell}$, a contradiction.
\end{proof}~

In Section~\ref{sec:exp} we show that no better approximation ratio is possible.
This is rather surprising, given the simplicity of the full-information problem,
and illustrates the intricacies of budget feasibility.

\section{General Submodular Functions}\label{sec:SM}
We now turn to the general case of nondecreasing submodular functions.  A demand function $V$ is nondecreasing if $S \subseteq T$ implies $V(S) \leq V(T)$.\newline

\begin{definition}
$V:2^{[n]}\to R_{+}$ is submodular if $V(S \cup \{i\}) - V(S) \geq V(T \cup \{i\}) - V(T) \qquad \forall S \subseteq T.$
\end{definition}~

In general, submodular functions may require exponential data to be represented.  We therefore assume 
the buyer has access to a \emph{value oracle} which given a query $S\subseteq [n]$ returns $V(S)$ (see related work section for more discussion on submodular maximization).  In designing truthful mechanisms for submodular maximization problems, the greedy approach is a natural fit, since it is monotone when agents are sorted according to their increasing marginal contributions relative to cost: the marginal contribution of an agent $i$ given a subset $S$ is $V_{i |S}:= V(S \cup \{i\}) -V(S)$.  In the marginal contribution-per-cost sorting the $i+1$ agent is the agent $j$ for which  $V_{j | S_i }/ c_{j}$ is maximized over all agents $\mathcal{N}$ where $S_i = \{1,2\ldots,i\}$, and $S_{0}=\emptyset$.  To simplify notation we will write $V_{i}$ instead of $V_{i| S_{i-1}}$.  This sorting, in the presence of submodularity, implies:
\begin{eqnarray}
V_1 / c_{1} \geq V_2 / c_{2} \geq \ldots \geq  V_n / c_{n}.\label{sorting}
\end{eqnarray}
Notice that $V(S_{k}) = \sum_{i \leq k}V_{i}$ for all $k$.

\subsection{The Proportional Share Allocation Rule}

The mechanism from the previous section for the limited symmetric case can be generalized appropriately to work for various classes in the submodular family of functions.\newline

\begin{definition}
For a budget $B$ and set of agents $\mathcal{N}$ with cost vector $c$, the generalized proportional share allocation rule, denoted $f_{\mathcal{M}}(c,B,\mathcal{N})$ sorts agents according to (\ref{sorting}) with costs vector $c$ and budget $B$ and allocates to agents $\{1,\ldots,k\}$ that respect $c_{i} \leq B\cdot V_{i}/ V(S_{i})$.  
Observe that this condition is met for every $\{1,\ldots,i\}$ when $i \leq k$.  
\end{definition}~

For concreteness consider the case of additive valuations (Knapsack from the Introduction): each agent is associated with a fixed value $v_i$ and $V(S)=\sum_{i\in S}v_i$.  Here the marginal contribution of each agent is independent of their place in the sorting, and we simply have that $V_i=v_i$ for all agents $i \in[n]$.  In this case $f_{\mathcal{M}}$ produces a budget-feasible mechanism. The reason is, it assures us that for each agent $i$, the threshold payments of $f_{\mathcal{M}}$, denoted $\theta_{i}$ do not  exceed the agent's proportional share:

$$\theta'_i = \min \Big\{ \frac{V_i\cdot B} {\sum_{i\in S}V_i}, \frac{V_i   \cdot c_{k+1}}{V_{k+1}} \Big \}.$$ 

\noindent which allows budget feasibility, as well as individually rationality: $\theta'_i\leq c_i$.  This seems to make the proportional share allocation rule an ideal candidate to obtain budget feasible mechanisms.  Indeed, with some minor adjustments,  for many problems with functions in the submodular class (e.g. symmetric, Knapsack, Matching,) this general approach works well and produces budget feasible mechanisms with good approximation guarantees (see the following section for more details).  Furthermore, as we discussed above, the proportional share mechanism is optimal is some cases, and in some restricted environments our characterizations show that this is essentially the \emph{only} budget feasible mechanism (see the following section).     
It seems however that this natural approach completely fails as soon as we encounter more involved cases, such as Coverage.

\subsection{The Difficulties}
The Coverage problem captures many of the difficulties that are associated with designing budget feasible mechanisms for the general submodular case.  In Coverage the marginal contribution of an agent is not fixed, but depends on the subset allocated by the algorithm in the previous stages.  An agent's marginal contribution therefore depends on its position in the sorting, which introduces several difficulties.

\paragraph*{Marginal Contributions are Affected by Costs}  
When applying the proportional share mechanism in Coverage, paying agents  $\theta'_i$ as above will be under the budget, as we desire.  However, observe that for each agent $i$ the payment depends on the marginal contribution $V_i$, which is determined by $i$'s position in the sorting.  Thus, in such a case the payments will depend on the agent's declared cost, and therefore cannot induce truthfulness, making the proportional share mechanism hopeless here.  

Simple allocation and payment schemes that are independent of the agent's position in the sorting also fail.  An approach that may seem natural is to replace marginal contributions with Shapley values~\cite{JM07} since they make the proportional contribution of an allocated agent independent of her position in the sorting.  Unfortunately, such an approach cannot approximate better than a factor of $\sqrt{n}$, as we show in the full version of this paper.  While it is tempting to get rid of the marginal contribution sorting, it is the only known means for obtaining good approximation guarantees for the general submodular case.

\paragraph*{Non-monotonicity of the Maximum Operator}  Bounded approximation ratios for submodular maximization under a knapsack constraint depend crucially on the ability to take the maximum between a greedy solution and the item with highest value.  In the general case, as well in the case of Coverage, taking this maximum does not preserve monotonicity:  simple examples show that for allocation rules that depend on marginal contribution sorting, by decreasing her cost an agent can \emph{decrease} the value of the allocation.

\subsection{Overview of Our Approach}
Our approach is based in three ideas:

\begin{itemize} 
\item  First, we derive an alternative characterization of the threshold payments of the proportional share allocation rule.  Since we know that this rule does not work, this may seem futile.  We'll show that this characterization plays a significant role in our design.      

\item  Using the above characterization, we show that for any (nondecreasing) submodular function, the threshold payments of a slightly modified version of the proportional share allocation rule are ``not too far'' from the agents' proportional contributions.  This enables us to guarantee that when running the modified version of the proportional share allocation rule \emph{with a constant fraction of the budget}, the threshold payments will be budget feasible.  

\item Finally, to obtain the approximation guarantee we partition the agents in a manner that allows us to include the variation of the proportional share rule over a \emph{subset of agents} and obtain good approximation guarantees. 
\end{itemize}

\subsection{Characterizing Threshold Payments}\label{sec:payments}

The following definition is key in our characterization.\newline

\begin{definition}
The marginal contribution of agent $i$ at point $j$ is $V_{i(j)} := V(T_{j-1} \cup \{i\}) - V(T_{j-1})$ where $T_{j}$ denotes the subset of the first $j$ agents in the marginal-contribution-per-cost sorting (as in (\ref{sorting})) over the subset $\mathcal{N} \setminus \{i\}$.
\end{definition}~

The intuition behind the payments characterization can be described as follows.  Consider running the proportional share mechanism without agent $i$.  For the first $j$ agents in the marginal contribution sorting, using the marginal contribution of $i$ at point $j$ we can find the maximal cost that agent $i$ can declare in order to be allocated instead of the agent in the $j$th place in the sorting.  While these costs may have arbitrary behavior as a function of $j$, we will show that taking the maximum of these values guarantees payments that support truthfulness.  To avoid confusion we use $T_{j}$ to denote the first $j$ agents according to this sorting, $V'_{j}$ to denote the marginal contribution of the $j$th agent in this case, and $k'$ to denote the index of the last agent $j\in \mathcal{N}\setminus\{i\}$ that respects $c_{j} \leq V'_{j} \cdot B/V(T_{j})$.  For brevity we will write $c_{i(j)} := V_{i(j)}\cdot c_{j}/V'_{j}$ and $\rho_{i(j)} := V_{i(j)}\cdot B/V(T_{j-1} \cup \{i\})$.\newline

\begin{lemma}[Payments Characterization]\label{lem:char}
The threshold payment for $f_{\mathcal{M}}$ is
$$\theta_{i} = \max_{j \in [k'+1]} \Big \{\min\{c_{i(j)},\rho_{i(j)}\} \Big\}.$$
\end{lemma}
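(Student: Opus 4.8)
The plan is to compute, for a winning agent $i$, the supremum of costs she can declare while remaining allocated; by Myerson's characterization this supremum is exactly the threshold payment $\theta_i$. Since $f_{\mathcal{M}}$ is monotone, it suffices to pin down the set of declared costs $c$ for which $i$ is allocated and show it is precisely $\{c : c \leq \max_{j} \min\{\overline{c}_{i(j)}, \rho_{i(j)}\}\}$, after which the two endpoints coincide and Myerson's theorem names the boundary $\theta_i$.

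The key structural observation is that greedy, run on all of $X$, agrees with its run on $X \setminus \{i\}$ up to the moment $i$ is selected: every marginal contribution evaluated before $i$ enters is computed on a set not containing $i$, hence is unaffected by $i$'s presence. Consequently the items selected ahead of $i$ form a prefix $T_{j-1}$ of the relabeled $X \setminus \{i\}$ ordering, and $i$ is inserted at the first position $j$ at which her ratio $V_{i(j)}/c$ overtakes that of the next available item, which by definition of $\overline{c}_{i(j)}$ is exactly the condition $c \leq \overline{c}_{i(j)}$. I will also record two monotonicity facts. First, $\rho_{i(j)} = B \cdot V_{i(j)}/V(T_{j-1} \cup \{i\})$ is non-increasing in $j$: by submodularity the numerator $V_{i(j)} = V_i(T_{j-1})$ shrinks and, by monotonicity of $V$, the denominator $V(T_{j-1}\cup\{i\})$ grows as $j$ increases. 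Second, $i$ cannot be allocated from any position beyond $|K_i|+1$, since the item at position $|K_i|+1$ of the $X\setminus\{i\}$ run fails the allocation condition on the predecessor set $T_{|K_i|}$, which is left untouched when $i$ is inserted after it; this is what confines the maximum to $j \in [|K_i|+1]$.

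With these in hand the two inequalities follow. For sufficiency, suppose $c \leq \min\{\overline{c}_{i(j)}, \rho_{i(j)}\}$ for some $j \leq |K_i|+1$. Then $c \leq \overline{c}_{i(j)}$ places $i$ at some position $j' \leq j$; the items preceding her, being $1, \ldots, j'-1 \in K_i$ with unchanged predecessor sets, still satisfy their conditions, while $c \leq \rho_{i(j)} \leq \rho_{i(j')}$ by monotonicity makes $i$ satisfy hers, so $i$ is allocated. For necessity, if $i$ is allocated at position $j^*$, then being placed ahead of the item originally at position $j^*$ forces $c \leq \overline{c}_{i(j^*)}$, and passing the allocation condition at her position forces $c \leq \rho_{i(j^*)}$; hence $c \leq \min\{\overline{c}_{i(j^*)}, \rho_{i(j^*)}\} \leq \max_{j}\min\{\overline{c}_{i(j)},\rho_{i(j)}\}$. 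Together these show $i$ is allocated exactly when $c \leq \max_j \min\{\overline{c}_{i(j)},\rho_{i(j)}\}$.

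The main obstacle — and the very reason the naive proportional-share payment fails — is the circular dependence of $i$'s marginal contribution, her position in the sorting, and even the relative order of the remaining items, on her declared cost. The prefix observation of the second paragraph is what breaks this circularity: because $i$'s contributions are only ever measured against sets that do not yet contain her, the portion of the run before $i$ is pinned down independently of where she eventually lands, so the insertion threshold at each candidate position $j$ is the clean quantity $\overline{c}_{i(j)}$ and the binding proportional-share constraint is $\rho_{i(j)}$. The only remaining delicacy is careful bookkeeping of the off-by-one in the definition of $V_{i(j)}$ and of the boundary position $j = |K_i|+1$ (where $\overline{c}_{i(j)}$ should be read as $+\infty$ when $i$ would be placed last), neither of which affects the shape of the argument.
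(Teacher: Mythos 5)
Your proof is correct and is essentially the paper's argument: both analyze where $i$ is inserted into the greedy ordering computed on $X\setminus\{i\}$ (the insertion threshold at position $j$ being $\overline{c}_{i(j)}$), require the proportional-share condition $c\leq\rho_{i(j)}$ at the landing position, and confine attention to positions in $[|K_i|+1]$ because the item at position $|K_i|+1$ fails its allocation condition on a prefix unaffected by $i$'s presence. Where you differ is in organization, and your version is the tighter one: you state and prove that $\rho_{i(j)}$ is non-increasing in $j$, a fact the paper's sufficiency step (``since $\theta_i\leq\rho_{i(r)}$, $i$ will be allocated'') silently relies on whenever a cost below $\theta_i$ lands $i$ at a position strictly earlier than the maximizing index $r$; and your necessity direction is a direct argument (allocation at position $j^*$ forces $c\leq\min\{\overline{c}_{i(j^*)},\rho_{i(j^*)}\}$ with $j^*\leq|K_i|+1$, hence $c$ is at most the maximum), which replaces the paper's two-case analysis on whether $\overline{c}_{i(r)}\leq\rho_{i(r)}$ together with its sub-cases on which index maximizes $\overline{c}_{i(\cdot)}$ or $\rho_{i(\cdot)}$. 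The net effect is the same: the set of winning bids is characterized as $\{c : c\leq\theta_i\}$, and Myerson's theorem identifies its supremum as the threshold payment, exactly as in the paper.
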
~

\begin{proof}
To characterize the threshold payment for agent $i$, relabel the agents according to the marginal-contribution-per-cost sorting over the subset $\mathcal{N} \setminus \{i\}$.  Consider $f_{\mathcal{M}}$ as a sequential allocation rule: at each stage $j$, the mechanism resorts the remaining agents that have not yet been allocated according to marginal contribution-per-cost sorting, and allocates to the first agent in the sorting if she meets the condition $c_{j} \leq V'_{j}\cdot B/V(T_{j-1} \cup \{j\})$.  

For a given stage $j$ in this sequential allocation, we can find the maximal cost agent $i$ would have been able declare and be allocated, if she had been considered by the mechanism at this stage: The value $c_{i(j)} = V_{i(j)} \cdot c_{j} / V'_{j}$ is the maximal cost $i$ can declare which would place her ahead of $j$ in the sorting, and if this cost does not exceed $\rho_{i(j)} = B\cdot V'_{i(j)}/V(T_{j-1} \cup \{i\})$, the mechanism would have allocated to agent $i$.  Therefore, had $i$ appeared at stage $j$, the minimum between these values is the maximal cost she can declare and be allocated at this stage.  Since $V_{i(j)}$ monotonically decreases with $j$ while $c_{j} / V'_{j}$ increases, $c_{i(j)}$ may have arbitrary behavior as a function of $j$.  However, as we now show, taking the maximum of these values result in threshold payments. 

Let $r$ be the index in $[k'+1]$ for which $\min\{{c}_{i(j)},\rho_{i(j)}\}$ is maximal.  Declaring a cost below $\theta_{i} \leq {c}_{i(r)}$ guarantees $i$ to be within the first $r \leq k'+1$ elements in the sorting stage of the mechanism, with $r-1$ items allocated.  Since $\theta_{i} \leq \rho_{i(r)}$, $i$ will be allocated. 

To see that declaring a higher cost prevents $i$ from being allocated, consider first the case where $c_{i(r)}  \leq \rho_{i(r)}$.  A higher cost places $i$ after $r$ in the sorting stage of the mechanism.  If the maximum of $c_{i(j)}$ over all $j\in [k'+1]$ is $c_{i(r)}$, reporting a higher cost places $i$ after an element which is not allocated and therefore it will not be allocated.  Otherwise, if $c_{i(r)} < c_{i(j)}$, for some $j \leq k'+1$, by the maximality of $r$ it must be the case that: 

$$ \frac{B \cdot V_{i(j)} }  {V(T_{j-1} \cup \{i\})  } = \rho_{i(j)}  <    c_{i(r)} <  c_{i(j)}$$ 

\noindent and $i$ will not be allocated as a cost above $\rho_{i(j)}$ will not meet the allocation condition.

In the second case when $c_{i(r)}  > \rho_{i(r)}$, if $r$ is the index which maximizes $\rho_{i(j)}$ over all indices in $[k'+1]$, reporting a higher cost will not meet the mechanism's allocation condition at each index in $[k'+1]$.  Otherwise, if there is some other index $j\in [k'+1]$ for which this maximum is achieved, then: 

$$ \frac{V_{i(j)}\cdot c_{j}}{V'_{j}}= c_{i(j)}  < \rho_{i(r)} < \rho_{i(j)}$$ 

\noindent and thus declaring a higher cost in this case places $i$ after $j$ in the sorting, and the mechanism will not consider $i$.
\end{proof}~

\begin{lemma}[Individual Rationality]\label{ir}
The mechanism $f_{\mathcal{M}}$ is individually rational, i.e., $c_i \leq \theta_{i}$.
\end{lemma}~

\begin{proof}
Observe that:
\begin{itemize}
\item[(a)] $V_{i(j)} \geq V_{i(j+1)} \quad \forall j \in \mathcal{N}$;
\item[(b)] $T_{j} = S_{j}                   \quad \forall j < i$;
\item[(c)] $V_{i | T_{i-1} } = V_{i}$.
\end{itemize}

Since the threshold payment is the maximum over all $\min\{c_{i(j)},\rho_{i(j)}\}$ in $[k'+1]$, it is enough to show that $c_i \leq \min\{{c}_{i(j)},\rho_{i(j)}\}$ for a certain $j \leq k'+1$.  Since (b) implies that $i \leq k'+1$ we can consider $i$'s replacement $j$ which appears in the $i$th place in the marginal-contribution-per-cost sorting over $\mathcal{N}\setminus\{i\}$.  Since $i \in [k]$, and due to (b) and (c) above, we have that

$$c_{i} \leq \frac{V_i\cdot B}{V((S_{i-1} \cup \{i\})}= \frac{V_{i | T_{i-1}}\cdot B }{V(T_{i-1}\cup \{i\})} = c_{i(j)}.$$

In the original sorting, $i$ appears ahead of $j$ (as implied from (b)), and therefore its relative marginal contribution is greater.  Thus:

$$c_{i} \leq \frac{V_{i | S_{i-1} }\cdot c_{j}}{V_{j | S_{i-1}}} = \frac{V_{i | T_{i-1}}\cdot c_{j}}{V'_{j | T_{j-1}}} = \rho_{i(j)}.$$

It therefore follows that $c_{i} \leq \min \{c_{i(j)},\rho_{i(j)}\} \leq \theta_{i}.$
\end{proof}~

\subsection{Payment Bounds}
The characterization above allows us to include a slightly modified version of the proportional share allocation rule in our mechanism, with threshold payments that are guaranteed to be no more than a constant factor away from agents' proportional contribution.  This is a key property which guides the design of our mechanism.

We will run the modified proportional share allocation rule over a subset of the agents $\mathcal{N}_{s}$, with a constant fraction of the budget, denoted $B'$.  We describe $\mathcal{N}_{s}$ and $B'$ explicitly in the following section, but for the purpose of showing the payment bounds, we can think of these as any subset of agents and any budget.  In this modified version of the proportional share allocation rule, for $i_{s}^* := \textrm{argmax}_{j \in \mathcal{N}_{s}}V(\{j\})$, we sort the agents of $\mathcal{N}_{s} \setminus\{i_{s}^*\}$ according to the marginal-contribution-per-cost order, and allocate to $W = S_{k} \cup \{i_{s}^*\}$, where $S_{k}$ are all $k$ agents in $\mathcal{N}_{s}$ that respect the condition $c_{i} \leq V_{i}\cdot B/ V(S_{i} \cup \{i_{s}^*\})$.  The characterization from above easily extends to this case using $\rho_{i(j)} = V_{i(j)}\cdot B/V(T_{j-1} \cup \{i,i^*_{s}\})$.  Under this modification we can show the following desirable bound on the threshold payments:\newline

\begin{lemma}[Payment Bounds]\label{bound}
For $i \in W \setminus\{i_{s}^*\}$: $$\theta_{i} \leq  \Big(\frac{6e-2}{e-1}\Big) \frac{{V_{i}\cdot B'}}{{V(W)}}.$$
\end{lemma}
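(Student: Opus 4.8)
The plan is to bound the threshold payment $\theta_i = \max_{j} \min\{\overline{c}_{i(j)}, \rho_{i(j)}\}$ by exploiting the fact that the inner minimum is dominated by the $\rho$ term whenever we need it. Recall $\rho_{i(j)} = B \cdot V_{i(j)} / V(T_{j-1} \cup \{i\})$, so it is already expressed directly in terms of a marginal contribution and the budget — exactly the shape of a proportional share. Since $\theta_i \leq \max_j \rho_{i(j)}$, the whole problem reduces to controlling $\max_j \rho_{i(j)}$ by $V_i \cdot B / V(W)$ up to the stated constant. I would first fix the index $r$ achieving the maximum and write $\theta_i \leq \rho_{i(r)} = B \cdot V_{i(r)} / V(T_{r-1} \cup \{i\})$.

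**Next I would** relate the numerator $V_{i(r)}$ back to $V_i$ and the denominator $V(T_{r-1}\cup\{i\})$ back to $V(W)$. For the numerator, submodularity is the natural tool: the marginal contribution $V_{i(r)} = V_i(T_{r-1})$ of $i$ when inserted at position $r$ is at most its marginal contribution $V_i$ to its own prefix, so $V_{i(r)} \le V_i$ — but this inequality goes the wrong way for a clean bound on a single term, so the real work is in the denominator. I would lower-bound $V(T_{r-1}\cup\{i\})$, the value of the first $r-1$ greedy elements (run without $i$) together with $i$. The key is that along the greedy sorting, condition $c_k \le V_k/\sum_{j\le k}V_j$ forces the partial sums $\sum_{j\le k}V_j = V(T_k)$ to grow fast enough that $V(T_{r-1}\cup\{i\})$ is a constant fraction of $V(W)$, where $W = f(c, X\setminus\{i_{max}\})\cup\{i_{max}\}$. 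This is where the maximal element $i_{max}$ enters: including it guarantees $V(W)$ is within a constant factor of the full greedy value, so that $V(W)$ cannot be much larger than the value accumulated up through position $r$.

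**The main obstacle** will be controlling the ratio $V_{i(r)}/V(T_{r-1}\cup\{i\})$ cleanly when $r$ is large — i.e. when $i$ would be inserted deep into the sorting. The constant $\tfrac{6e-2}{e-1}$ is the telltale sign that a $1-1/e$-type summation bound is being invoked: the greedy allocation condition together with submodularity should yield a geometric-type decay of the marginal contributions relative to the running total, and summing (or rather, taking the worst single term of) such a sequence produces the $e/(e-1)$ factor, with the remaining small constants $6$ and $-2$ coming from the slack needed to pass between the ``$i$ absent'' sorting $T_j$ and the ``$i$ present'' set $W$, and from charging $V_i$ against $V_{i(r)}$. I would handle this by splitting into the case $\overline{c}_{i(r)} \le \rho_{i(r)}$ (where $\theta_i = \overline{c}_{i(r)}$ and the cost-normalization bound $\overline{c}_{i(r)} = V_{i(r)} c_{i(r)}/V_r$ plus the sorting inequality (\ref{sorting}) gives an even tighter estimate) and the case $\overline{c}_{i(r)} > \rho_{i(r)}$ (where $\theta_i = \rho_{i(r)}$ and the argument above applies directly).

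**Finally I would** assemble the two cases, take the worse of the resulting constants, and verify that it is bounded by $\tfrac{6e-2}{e-1}$, thereby establishing $\theta_i \le \bigl(\tfrac{6e-2}{e-1}\bigr) V_i \cdot B / V(W)$ for every agent $i$. The payoff of this bound, as the surrounding text indicates, is that summing the payments over the allocated set $W$ telescopes against $\sum_i V_i = V(W)$ up to the constant, so the total payment stays within a constant multiple of $B$ — which is precisely what makes the eventual mechanism budget feasible.
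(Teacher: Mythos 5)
Your opening reduction is where the proof breaks. You claim $\theta_i \le \max_j \rho_{i(j)}$ and that it therefore suffices to control $\max_j \rho_{i(j)}$. But $\rho_{i(1)} = B\cdot V_{i(1)}/V(T_0\cup\{i\}) = B\cdot V(\{i\})/V(\{i\}) = B$, so this maximum is always at least $B$, and $B \le \big(\tfrac{6e-2}{e-1}\big)V_i\cdot B/V(W)$ would force $V(W)$ to be within a constant factor of $V_i$, which is false in general. The $\overline{c}_{i(j)}$ terms are not a technicality to be discarded; they are exactly what keeps the threshold small at early insertion positions. Your fallback case split at the end inherits the same defect: in the case $\theta_i = \rho_{i(r)}$ you propose to lower-bound $V(T_{r-1}\cup\{i\})$ by a constant fraction of $V(W)$, but for small $r$ this set's value can be as small as $V(\{i\})$, which need not be any constant fraction of $V(W)$. (Also, your parenthetical that $V_{i(r)}\le V_i$ ``goes the wrong way'' is backwards: an upper bound on the numerator is precisely what an upper bound on $\theta_i$ needs, and the paper's proof uses exactly this inequality.)

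The missing idea is a two-step decomposition through the solution computed \emph{without} $i$. The paper sets $W' = K_i \cup \{i_{max}\}$ and first proves the constant-free bound $\theta_i \le V_i\cdot B/V(W')$: if the maximizing index is $r = |K_i|+1$ it uses the $\rho$ term (whose denominator then contains all of $K_i$), while if $r \le |K_i|$ it uses the $\overline{c}$ term $\overline{c}_{i(r)} = V_{i(r)}c_r/V_r$, exploiting that $r$ is an \emph{allocated} agent, so its cost satisfies the proportional-share allocation condition. The constant then comes entirely from a second, separate step: $V(W) \le \big(\tfrac{6e-2}{e-1}\big)V(W')$, proved by comparing both sets to the optimum $OPT_-$ over $X\setminus\{i,i_{max}\}$ via (i) the counting argument $\sum_{j=1}^{k'+1}V_j > \sum_{j=k'+1}^{\ell}V_j$ forced by the maximality of $k'=|K_i|$, (ii) the greedy bound $OPT_- \le \tfrac{e}{e-1}V([\ell+1])$, and (iii) $V(W)\le OPT_- + 2V(i_{max})$. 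This with-$i$ versus without-$i$ comparison --- the actual source of $\tfrac{6e-2}{e-1}$, and the reason your instinct about the $e/(e-1)$ factor is right --- is absent from your plan; your observation that including $i_{max}$ keeps $V(W)$ comparable to the greedy value points in the right direction, but it cannot be executed against the prefix $T_{r-1}\cup\{i\}$ as you propose.
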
~

\begin{proof}
For $T_{k'}$ as above, let $W'=T_{k'} \cup \{i^{*}_{s}\}$ and let $r$ be the index for which $\theta_{i} = \min\{c_{i(r)}, \rho_{i(r)} \}$.  If $r \leq k'$, observe that the sorting implies $c_{r}/V'_{r} \leq c_{k'}/V'_{k'}$ and therefore: 

$$\theta_{i} \leq  \frac{V_{i(r)}\cdot c_{r}}{V'_{r}} \leq \frac{V_{i(r)}\cdot c_{k'}}{V'_{k'}} \leq \frac{V_{i(r)}\cdot B'}{V(W')} \leq \frac{V_{i}\cdot B'}{V(W')}$$

\noindent where the last inequality relies on the observation that $V_{i(j)} \leq V_{i}$ for every $j\in \mathcal{N}_{s} \setminus\{i^*_{s}\}$, which is due to the fact that $T_{j}=S_{j}$ for $j \leq i$ and the decreasing marginal utility property of $V$.  In the case where if $r=k'+1$: 
$$\theta_{i} \leq \rho_{i(r)} = \frac{V_{i(k'+1)} \cdot B'  }{V(W' \cup \{i\})} \leq \frac{V_{i} \cdot B'}{V(W')}.$$

\noindent We therefore see that in both cases $\theta_{i} \leq {V_{i} \cdot B'}/{V(W')}$.  To complete our proof, we will show that $V(W) \leq ((5e-1)/(e-1))V(W')$.  

Consider the marginal contribution-per-cost sorting on $\mathcal{N}_{s}\setminus\{i,i^*_{s}\}$ and let $\ell' := \textrm{max} \{t |\sum_{j \leq t}c_{j} \leq B'\}$.  Due to the marginal contribution-per-cost sorting we have:

$$ \frac{c_{k'+1}}{V'_{k'+1}} \sum_{j=k'+1}^{\ell'} V'_{j} \leq \sum_{j=k'+1}^{\ell'} \Big ( \frac{c_{j}}{V'_{j}} \Big )V'_{j} \leq B'   .$$

Since $c_{k'+1}/V'_{k'+1} > B / V(W' \cup \{k'+1\}) $, the above inequality implies: 

$$V(T_{\ell'}) - V(T_{k'}) < V(W' \cup \{k'+1\})$$

\noindent and since $T_{k'} \subset W'$, and $i_{s}^* \in W'$ this implies that $3V(W') \geq V(T_{\ell})$.  From submodularity, it can be shown that $V(T_{\ell+1})$ is a ${e}/({e-1})$-approximation of the optimal solution (over $\mathcal{N}_{s}\setminus\{i,i^*_{s}\}$)~\cite{KMN99,KG05}.  Since $i_{s}^* \in W'$ we have that:
\begin{align}
V(W) \ & \leq  \ OPT(c,B',\mathcal{N}_{s}) 			\nonumber \\ 
	   & \leq \ OPT(B',\mathcal{N}_{s} \setminus\{i,i^*\}) + V(\{i\})+V(\{i^*\})  \nonumber \\
	  & < \ \Big (\frac{4e}{e-1} + 2\Big)V(W')		 \nonumber
\end{align}

\noindent which concludes our proof.
\end{proof}~

\subsection{Approximation Guarantee}
To obtain the approximation guarantee we partition the set of agents based on their value and declared cost.  Let $i^*:= \textrm{argmax}_{i\in \mathcal{N}}V(\{i\})$, $\mathcal{N}_{s} :=\{i \neq i^*: c_{i}\leq B/2\}$, and $\mathcal{N}_{\ell} := \mathcal{N} \setminus (\mathcal{N}_{s} \cup \{i^*\})$.  This partition, as we argue later, does not break monotonicity.  Our mechanism will run the modified proportional share allocation rule over $\mathcal{N}_{s}$, with a constant fraction of the budget $B/\alpha$.  

The rational behind this approach is the following.  Due to their large cost, any feasible solution includes at most one agent in $\mathcal{N}_{\ell}$, and since each such agent has value lower than that of $i^*$, we will be able to discard $\mathcal{N}_{\ell}$ without sacrificing too much from the quality of our solution.   Since all the costs in $\mathcal{N}_{s}$ are below $B/2$, including $i^*_{s}$ in our solution $W$ is always feasible.  Importantly, we can show this allocation rule is a constant factor approximation over $\mathcal{N}_{s}$:\newline

\begin{lemma}\label{apx_g}
$\Big( \frac{(3 +2\alpha) e -1}{e-1}\Big)  V(W) \geq OPT(c,B,\mathcal{N}_{s})$.
\end{lemma}~

\begin{proof}[Sketch] 
Similarly to what we have shown in lemma~\ref{bound}, one can show that 
$$V(S_\ell) - V(S_k) < \alpha V(W \cup \{k+1\}) $$ 
\noindent where $\ell$ is the maximal index for which $\sum_{j\leq \ell}c_{j} \leq B$ when items are taken according to their marginal contribution relative to cost on $\mathcal{N}_{s}\setminus\{i_{s}^*\}$.  Since $W$ includes $i^*_{s}$, one can verify that $V(S_{\ell+1}) \leq (2+2\alpha)V(W)$.  Similar to before, we know that 
$(1-1/e)OPT(c,B,\mathcal{N}_{s} \setminus \{i^*_{s}\}) \leq V(S_{\ell+1})$, and thus:
\begin{align}
OPT(c,B,\mathcal{N}_{s}) & \ \leq \ OPT(c,B,\mathcal{N}_{s}\setminus\{i^*_{s}\}) + V(\{i^*_{s}\}) \nonumber \\
& \ \leq \ \Big ( \frac{(2+2\alpha)e}{e-1}+1\Big)V(W) \nonumber 
\end{align}
\noindent which implies our desired bound.
\end{proof}~

\subsection{Main Result}
We can now prove our main theorem.\newline

\begin{theorem}\label{main}
For any submodular maximization problem there exists a constant factor approximation randomized mechanism in the value query model which is budget feasible and universally truthful.  Furthermore, no budget feasible mechanism can do better than $2-\epsilon$, for any fixed $\epsilon>0$.
\end{theorem}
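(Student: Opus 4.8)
The statement has two halves, and the lower bound is the easy one. Since the symmetric submodular functions of Section~\ref{sec:sym} form a subclass of the nondecreasing submodular functions, the construction of Proposition~\ref{ds_lb} applies verbatim: the additive function $V(S)=|S|$ with one item of cost $\delta$ and $n-1$ items of cost $B-\delta$ is submodular, its full-information optimum has value two, and any budget-feasible mechanism of bounded approximation must allocate the cheap item; by Myerson's characterization its threshold payment is then $B$, exhausting the budget and forcing value one. Hence no budget-feasible mechanism can beat $2-\epsilon$, and I would simply invoke Proposition~\ref{ds_lb}.

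The positive half is where the work lies. I would build the universally truthful mechanism by randomizing between two monotone, budget-feasible, individually rational sub-mechanisms: (A) allocate the single highest-value item $i_{max}=\mathrm{argmax}_j V(\{j\})$ whenever $c_{i_{max}}\le B$ and pay it $B$; and (B) run the proportional-share greedy rule $f_{\mathcal{M}}$ of Section~\ref{sec:payments} on $X\setminus\{i_{max}\}$ with the threshold payments $\theta_i$ of Lemma~\ref{lem:char}. Branch (A) is monotone because $i_{max}$ is selected from values alone, independently of costs, so its threshold payment is exactly $B$; it is trivially budget feasible and, since $c_{i_{max}}\le B$, individually rational. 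Branch (B) is monotone by construction and paid its threshold payments, hence truthful, and individually rational by Lemma~\ref{ir}. A randomization over two truthful mechanisms is universally truthful, which settles truthfulness and individual rationality at once.

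The crux is guaranteeing budget feasibility of branch (B) in every realization. Lemma~\ref{bound} gives $\theta_i\le \alpha\, V_i B/V(W)$ with $\alpha=\frac{6e-2}{e-1}$ and $W=f_{\mathcal{M}}(c,X\setminus\{i_{max}\})\cup\{i_{max}\}$; summing over the allocated set $G$ and using $\sum_{i\in G}V_i=V(G)\le V(W)$ bounds the total payment by $\alpha B$, i.e. within a constant factor of the budget but not inside it. To land inside the budget I would run branch (B) with a reduced budget $B'=B/\alpha$, so that the same inequality applied with $B'$ yields total payment at most $B$ in every realization; this costs only a constant factor in the value guarantee, which I would fold into the final approximation constant.

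For the approximation I would invoke the algorithmic guarantee already used inside the proof of Lemma~\ref{bound}: greedy-by-marginal-contribution-per-cost, run with budget $B'$ and augmented by the single best item, is within a constant factor of the full-information optimum at budget $B$, i.e. $V(S^*)=O(V(G')+V(i_{max}))$ where $G'$ is the greedy set at budget $B'$. Separating $i_{max}$ into branch (A) is essential here: it is exactly the indivisible high-value item whose value need not survive a budget reduction, and the all-items-cost-$B$ example shows branch (B) alone cannot be competitive. Choosing the probability $p$ of running (A) to balance $p\,V(i_{max})$ against $(1-p)\,V(G')$ then makes the expected value a constant fraction of $\max\{V(i_{max}),V(G')\}$, hence of $V(S^*)$. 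The single real obstacle is precisely this reconciliation: the threshold payments of Lemma~\ref{lem:char} strictly exceed the naive proportional shares and Lemma~\ref{bound} controls them only up to the factor $\alpha$, so the budget scaling and the choice of $p$ must be tuned together. Randomization is indispensable because the deterministic maximum of the greedy set and $i_{max}$ is non-monotone, as noted in Section~\ref{sec:SM}; splitting that maximum into two independently monotone branches is what preserves per-realization truthfulness.
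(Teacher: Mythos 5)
Your proposal is correct and follows the same skeleton as the paper's proof: the lower bound is inherited from the symmetric case (Proposition~\ref{ds_lb}), and the upper bound comes from randomizing between the single highest-value item paid $B$ and the proportional-share rule $f_{\mathcal{M}}$ run at a scaled-down budget with its threshold payments, where Lemma~\ref{lem:char} gives truthfulness, Lemma~\ref{ir} gives individual rationality, Lemma~\ref{bound} plus budget scaling gives budget feasibility, and the greedy analysis of~\cite{KMN99,KG05} gives the approximation. The packaging differs in one respect: the paper's greedy branch is restricted to $X_1=\{i : c_i\le B/2\}$, deterministically embeds $i_{max(1)}=\mathrm{argmax}_{i\in X_1}V(\{i\})$ in the winning set with a reserved payment of $B/2$, caps the remaining payments at $B/2$, and therefore scales the greedy budget by $(12e-4)/(e-1)$ --- \emph{twice} the constant of Lemma~\ref{bound} --- so that the greedy payments fit in the unreserved half of the budget. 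Your branch (B) is plain greedy on $[n]\setminus\{i_{max}\}$ with uncapped threshold payments and scaling $(6e-2)/(e-1)$. This is legitimate: the summation $\sum_{i\in G}\theta_i\le \beta B'\,V(G)/V\bigl(G\cup\{i_{max}\}\bigr)\le \beta B'=B$ only needs $V(G)\le V(W)$, and the approximation chain still closes, since $OPT \le OPT_{[n]\setminus\{i_{max}\}}+V(\{i_{max}\})$ and $OPT_{[n]\setminus\{i_{max}\}}\le \frac{e}{e-1}\bigl((1+\beta)V(G)+(2+\beta)V(\{i_{max}\})\bigr)$, using $V_{k+1},V_{\ell+1}\le V(\{i_{max}\})$. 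If anything, your variant is simpler and yields a somewhat better constant, at the price of writing out this chain yourself instead of quoting the paper's $OPT_{X_1}\le \bigl((2\alpha+2)e/(e-1)\bigr)V(W)$ bound; the paper's cost-restriction to $X_1$ exists only to make its $B/2$ payment cap consistent with thresholds, a complication you avoid entirely.

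One caution: the phrase ``choosing the probability $p$ \ldots to balance $p\,V(i_{max})$ against $(1-p)\,V(G')$'' must not be taken literally. $V(G')$ depends on the reported costs, so a bid-dependent $p$ would destroy universal truthfulness --- an agent could misreport to shift probability toward the branch it wins. The coin must be fixed in advance (or depend only on the public valuation function); $p=1/2$, as in the paper, already gives expected value at least $\tfrac12\max\{V(\{i_{max}\}),V(G')\}$ and hence the claimed constant factor.
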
~

Our analysis shows that in expectation our mechanism guarantees an approximation ratio of  $\Big(\frac{58e^2-32e+6}{(e-1)^2}\Big)\approx117.7$.
It is possible that tighter analysis can show the mechanism does better.  We further discuss this points in Section~\ref{sec:disc}.\newline

\begin{proof}
Consider the following mechanism:\newline

\begin{tabular}{|l|}
\hline
{\bf  A Budget Feasible Approximation Mechanism} \\ \hspace{0.35in}{\bf for Submodular Functions}\\
\hline
{\bf Initialize:}
\hspace{0.15in} $i^*\longleftarrow \textrm{argmax}_{i \in \mathcal{N}}V(\{i\}),$\\ 
\hspace{0.73in} $\mathcal{N}_s \longleftarrow   \{i \neq i^*:c_{i}\leq B/2\}$,\\
\hspace{0.73in} $i^*_{s} \longleftarrow \textrm{argmax}_{i \in \mathcal{N}_s}V(\{i\})$,\\
\hspace{0.73in} $W \longleftarrow \{i^*_{s}\}$,\\
\hspace{0.73in}$B' \longleftarrow \Big((e-1)/(12e-4)\Big) B$,\\
\hspace{0.78in}$i \longleftarrow \textrm{argmax}_{j \in \mathcal{N}_s\setminus\{i^*_{s}\}} V(\{j\}) / c_{j}$\\\\

{\bf While} $c_i \leq V_{i}\cdot B'/V(W \cup \{i\}) $\\
\qquad  {\bf Do:} $W \longleftarrow W \cup \{i\}$\\
\hspace{1.6cm}  $i \longleftarrow \textrm{argmax}_{j \in \mathcal{N}\setminus\{i^*_{s}\}}V_{j|S_{i}}/ c_{j}$\\ \\
{\bf Output:} {Choose u.a.r. from } $\{(W,\hat{\theta}) ,(\{i^*\},B) \}$\\

\hline
\end{tabular}\newline

The payment $\hat{\theta}$ here is $\hat{\theta}_{i^*_{s}} = B/2$ and $\hat{\theta}_{i} = \min\{\theta_i,B/2\}$ for $i\neq i^*_{s}$, where $\theta_i$ are the payments as described for the modified 
proportional share allocation rule, using the budget $B' = B/\alpha$, for $\alpha = (12e-4)/(e-1)$.  Observe that $\alpha$ is exactly twice the constant from the bound in Lemma~\ref{bound}.  

In case $i^*$ is allocated, $B$ is clearly her threshold payment.  If $W$ is allocated, from the characterization lemma and the fact that $\mathcal{N}_{s}$ consists only of agents with cost less than $B/2$, $\hat{\theta}$ as described above are clearly the threshold payments.  One can verify the partition of agents is monotone and since payments are bounded by $B/2$ agents in $\mathcal{N}_{\ell}$ cannot benefit by misreporting their cost. 

Since the modified proportional share allocation rule uses $B'=B/\alpha$ as its budget,  from lemma~\ref{bound}, we can conclude that: 

$$\sum_{i}\hat{\theta}_{i} \leq B/2 +  \alpha \sum_{i \neq i^{*}_{s}} \frac{V_{i}\cdot B'}{V(W)} \leq B$$ 

\noindent and the mechanism is therefore truthful and budget feasible.  Individual rationality and monotonicity were discussed above, and the lower bound as described in the following section applies here as well.

Finally, let $\beta = ((3+2\alpha) / (e - 1) ) / (e-1)$.  From Lemma ~\ref{apx_g} and the partition of agents we have:

\begin{align}
OPT(c,B,\mathcal{N}) & \leq \ OPT(c,B,\mathcal{N} \setminus \mathcal{N}_{\ell}) + V(\{i^*\}) \nonumber \\
& \leq \ OPT(c,B,\mathcal{N}_{s}) + 2V(\{i^*\}) \nonumber \\
& \leq \beta V(W) + 2V(\{i^*\}) \nonumber \\
& \leq (2+ \beta) \max\Big \{V(W), V(\{i^*\})\Big\} \nonumber 
\end{align}

\noindent which gives us our desired approximation ratio.
\end{proof}~

\section{The Space of Budget Feasible Mechanisms}\label{sec:exp}
In this section we address some natural questions that arise when exploring budget feasible mechanisms.  We limit our discussion to deterministic mechanisms.
  
\subsection{Lower Bounds and Improved Approximations}
For special cases of submodular functions, specialized techniques based on the main result yield better approximation ratios: \newline

\begin{theorem}
For Knapsack there is a budget feasible 5-approximation mechanism.   For Matching there is a budget feasible $(\frac{5e-1}{e-1})$-approximation mechanism.  For either problem, no budget feasible mechanism can approximate within a factor better than $2-\epsilon$, for any fixed $\epsilon>0$.
\end{theorem}~

We defer the details of upper bound proofs to the full version of the paper.  We will now show a lower bound that is independent of computational assumptions and shows that no approximation ratio better than two is possible.  For our lower bound we'll consider a very simple function, $V(S) = |S|$.  Observe that all the functions in the submodular class we have discussed, including symmetric submodular include this demand valuation.   \newline

\begin{proposition}\label{ds_lb}
For $V(S) = |S|$, no budget feasible mechanism can guarantee an approximation of $2-\epsilon$, for any $\epsilon>0$.
\end{proposition}~

\begin{proof}
Suppose we have $n$ items with costs
$c_{1}=c_2=\cdots = c_{n} = B/2+\epsilon$, for some positive $\epsilon< B/2$.
Assume, for purpose of contradiction that $f$ is a budget feasible mechanism with approximation ratio better than 2.  In particular, $f$ has a finite approximation ratio and must therefore allocate to at least one agent in this case.  W.l.o.g., assume $f$ allocates to agent $1$.  

By monotonicity, agent $1$ can reduce her cost to $c'_{1}= \epsilon' < B/2-\epsilon$ and remain allocated.  For this cost vector, $(c_{1},c_{-1})$,  Myerson's characterization implies that the threshold payment for agent 1 should be at least $B/2 +\epsilon$, by individual rationality and budget feasibility, $f$ cannot allocate to any other agent.  Observe however that the optimal full information solution in this case allocates to two agents which contradicts $f$'s approximation ratio guarantee.
\end{proof}~

\subsection{Lower Bound on Fractionally Subadditive Functions} 
In light of the positive results for submodular functions, our natural desire would be to extend the main result to more general classes of problems. 
We now show there is little hope in that, at least in the value query model.  Let us consider fractionally subadditive functions:\newline
  
\begin{definition}
A function
$V:2^{[n]}\to \mathcal{R}$ is called fractionally subadditive if there exists a finite set of additive valuations $\{a_{1},\ldots,a_{t}\}$ s.t. $V(S) = max_{i \in [t]}a_i(S)$.
\end{definition}~

It is known that every submodular function can be represented as a fractionally subadditive function, and that all fractionally subadditive functions are subadditive~\cite{LLN01}.  Using a simple reduction from~\cite{MSV08} we show that for mechanisms which use value query oracles, obtaining reasonable approximations in the case of fractionally subadditive demands is hard, regardless of incentive considerations.\newline

\begin{theorem}
In the case of fractionally subadditive demands, any algorithm which approximates within a factor better than $n^{\frac{1}{2}-\epsilon}$, for any fixed $\epsilon>0$, requires exponentially many value queries.  This is true even in the setting where all costs are public knowledge.
\end{theorem}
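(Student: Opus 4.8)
The plan is to drop incentives entirely and prove the statement as a value-query lower bound for the \emph{underlying optimization problem}. Since the claim is asserted to hold even when all costs are public knowledge, truthfulness is irrelevant: a budget feasible mechanism in particular outputs a feasible high-value set, so it suffices to exhibit an instance -- a cost vector, a budget, and a fractionally subadditive $V$ accessed through a value oracle -- on which every algorithm making subexponentially many value queries is forced to be a factor $n^{1/2-\epsilon}$ away from the optimum. I would take all costs equal, $c_i=1$, and set $B=k$, so that the feasible region is exactly the family of sets of size at most $k$ and the problem becomes fractionally subadditive maximization under a cardinality constraint. This is precisely the regime addressed by the indistinguishability machinery of \cite{MSV08}, which I would invoke (or re-derive) rather than build from scratch.

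The hard instance is drawn from a distribution. I would plant a hidden random structure $T$ of carefully chosen size $k=\Theta(\sqrt n)$ and define $V$ as a maximum of additive clauses -- hence fractionally subadditive -- of two kinds: a family of \emph{base} clauses, present regardless of $T$, that assigns every set $S$ a value of order $|S|$ and, crucially, gives \emph{identical} marginals to all singletons and small sets, so that cheap probes carry no information about $T$; and a hidden \emph{reward} tied to $T$, unlocked only by a set whose overlap with $T$ sits far above the overlap expected of a $T$-oblivious set, that pushes the optimal value up to order $\alpha\cdot k$ for a reward parameter $\alpha=n^{1/2-\epsilon}$. A feasible set aligned with $T$ then witnesses an optimum of order $\alpha k$, whereas the base value attainable by a set that is oblivious to $T$ is only of order $k$; this is the source of the $n^{1/2-\epsilon}$ value gap.

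For the lower bound proper I would fix a deterministic algorithm (randomized algorithms are handled by averaging over $T$ / Yao's principle) and track its transcript on the base function $V_0$, the instance with the reward switched off. Define the \emph{reveal} event as the first query whose answer under the planted $V$ differs from its answer under $V_0$. As long as no reveal has occurred, the algorithm's queries and its eventual output form a fixed, $T$-independent sequence, so for each of these finitely many fixed sets I can bound the probability over the random $T$ that it either triggers a reveal or happens to align with $T$. The equalized marginals of the base clauses make each such probability exponentially small in $n$ -- this is exactly where the concentration of the overlap between a fixed set and a random large hidden structure enters -- so a union bound over a transcript of length $2^{n^{\delta}}$ still leaves the reveal event with probability $o(1)$. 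Conditioned on no reveal, the output has value $O(k)$ while the optimum is $\Omega(\alpha k)$, yielding the claimed ratio.

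The main obstacle -- and the reason the bound reads $n^{1/2-\epsilon}$ rather than $\sqrt n$ -- is the simultaneous calibration of the hidden-structure size $k$, the reward magnitude $\alpha$, and the concentration rate. A single linear reward clause supported on $T$ would leak immediately through singleton queries, so the construction must be arranged (as in \cite{MSV08}) so that every element carries the same marginal and the reward is accessible only to sets whose overlap with $T$ exceeds the oblivious expectation by a wide margin. Pushing the reward all the way to $\alpha=\sqrt n$ makes the reveal threshold coincide with the typical overlap and destroys concentration, whereas backing off to $\alpha=n^{1/2-\epsilon}$ buys the $n^{\epsilon}$ multiplicative slack between the reveal threshold and the mean overlap that the tail bound needs in order to defeat a union bound over exponentially many \emph{adaptive} queries. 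Getting this balance right, and verifying that the base clauses genuinely equalize all marginals so that no cheap query (in particular no singleton) leaks, is the crux of the argument; everything else is bookkeeping.
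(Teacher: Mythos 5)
Your proposal is correct and takes essentially the same route as the paper: drop incentives, set unit costs with budget $B=\sqrt{n}$ so the problem becomes cardinality-constrained maximization of an XOS function, and use the planted-set indistinguishability construction of~\cite{MSV08} to force a gap between the planted optimum and the value reachable by any $T$-oblivious algorithm (your parametrization is exactly that construction scaled by $\alpha=n^{1/2-\epsilon}$, and the paper simply cites the \cite{MSV08} lemma as a black box where you sketch its transcript/reveal-event proof). The only inaccuracy is quantitative: the per-query reveal probability is $\exp(-\Omega(n^{\epsilon}))$ rather than exponentially small in $n$, so the query lower bound is exponential in $n^{\epsilon}$, which matches the lemma the paper quotes and still suffices for the theorem.
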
~

\begin{proof} Our proof relies on the following lemma shown in~\cite{MSV08}.  We will use $v_{S}$ to denote the additive function that assigns the value 1 to an item $j \in S$ and 0 to all other items, and $\overline{v}$ to denote the function that assigns a value $(1+\epsilon/2)/(n^{\frac{1}{2}})$ to every item $j \in [n]$.\newline

\begin{lemma}[\cite{MSV08}]
There exists a set $T$, $|T|=\sqrt{n}$,  for which distinguishing between the functions $V = \max\{v_{S:|S|\leq(1+\epsilon/2)n^{\epsilon}},\overline{v}\}$, and $V' = \max\{V,v_{T}\}$ requires exponentially (in $n^{\epsilon}$) many value queries.
\end{lemma}~

Let $c_{1}=c_{2},\ldots,c_{n} = 1$ and $B=\sqrt{n}$.  For such $T$ as in the lemma, we have that $\sum_{i\in T}c_{i}=B$, and $V'(T)=\sqrt{n}$ while $V(T)=(1+\epsilon/2)n^{\epsilon}$.  Thus, an algorithm that approximates better than the desired approximation ratio must be able to distinguish between these two valuations, and requires exponentially many value queries.
\end{proof}~

\subsection{Impossibility of Hiring a Team of Agents}  Looking at problems previously studied in procurement, one could also ask whether ``hiring a team of agents'' problems~\cite{AT07,T03,KKT05,CFHK08} have good budget feasible mechanisms.  In these problems there is a set of feasible outcomes (e.g. all possible spanning trees or all paths from $s$ to $t$) and the goal is to design a mechanism that yields a feasible solution.  In the literature these problems have been studied with the goal of designing mechanisms which yield minimal payments (frugal), according to various benchmarks.  

In our notation, such problems can be written as having a function $V(S)=1$ if $S \in \mathcal{F}$, and 0 otherwise, where $\mathcal{F}$ is the set of all feasible outcomes.  Call such a problem {\em nontrivial} if all solutions in $\mathcal{F}$ contain more than one element.\newline

\begin{theorem}
There is no budget feasible mechanism with a bounded approximation ratio for any nontrivial ``hiring a team of agents'' problem.
\end{theorem}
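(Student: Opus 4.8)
The plan is to exhibit, for any nontrivial "hiring a team of agents" problem, a configuration of private costs that forces a clash between the existence of a feasible solution (which any bounded-approximation mechanism must return) and budget feasibility. The key structural fact I would exploit is the one already used in the introductory corkscrew example and in Proposition~\ref{add_lb}: if a mechanism must \emph{always} allocate a particular agent $i$ in order to guarantee a feasible solution, then by Myerson's characterization the threshold payment to $i$ is the supremum of costs it can declare while remaining allocated, and if that supremum is $B$ the whole budget is surrendered to $i$, leaving nothing for the other agents any feasible solution necessarily contains.

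The first step is to set up the instance. Since the problem is nontrivial, every feasible solution contains at least two elements. I would choose a single feasible solution $S$ and set the costs of all agents outside $S$ so large (say, larger than $B$, or so that they cannot participate in any cheap solution) that any bounded-approximation mechanism is effectively forced to select from within a small family of feasible solutions sharing a common agent. More carefully: I would identify some agent $i$ that lies on \emph{every} feasible solution that has any chance of fitting within budget --- for instance by making $i$ a cut vertex / bridge-type element in the combinatorial structure, so that removing $i$ destroys feasibility. Then any mechanism with a bounded approximation ratio must allocate $i$, because otherwise on this instance it returns no feasible solution at all (value $0$ against a positive optimum), giving an unbounded ratio.

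The second step applies Myerson. Because $i$ must be allocated on this instance regardless of the cost it reports (as long as feasibility is achievable), its threshold payment equals the supremum of bids under which it stays allocated. I would arrange the instance so this supremum is at least $B$: let $i$'s true cost be tiny, and let all competing alternatives to $i$ be unavailable or prohibitively expensive, so the mechanism keeps $i$ even as $i$'s declared cost approaches $B$. By the threshold-payment rule, $i$ is then paid (essentially) $B$. The third step closes the argument: since the problem is nontrivial, the feasible solution containing $i$ contains at least one other agent $j$, and individual rationality forces $p_j \geq c_j > 0$. Hence $\sum_\ell p_\ell s_\ell \geq p_i + p_j > B$, violating budget feasibility. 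This is exactly the structure of the corkscrew example, now run against an arbitrary nontrivial feasibility system.

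\textbf{The main obstacle} is the second step --- guaranteeing that there really exists an agent $i$ whose threshold payment can be driven up to $B$ \emph{for every} nontrivial problem, not just for convenient ones like $s$-$t$ paths or spanning trees. The danger is a feasibility structure rich in disjoint alternatives, where no single agent is indispensable, so that no agent's threshold reaches $B$ and the budget can be spread thinly. I expect the resolution is that nontriviality plus a bounded approximation ratio still lets me engineer costs so that only \emph{one} cheap feasible solution survives (by pricing every agent outside a chosen solution out of the running), collapsing the alternatives and making each agent in the surviving solution effectively indispensable; the threshold payment to one of them then rises to $B$ while a second agent in the same solution still demands positive payment. Making this reduction uniform across all nontrivial feasibility systems --- rather than case-by-case --- is the delicate part, but the Myerson-plus-indispensability skeleton should carry it through.
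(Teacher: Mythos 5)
Your proposal is correct and follows essentially the same route as the paper's proof: fix a feasible solution $S$, give its agents tiny costs and price every outside agent at (or above) the budget $B$, so that bounded approximation plus individual rationality and budget feasibility force the mechanism to allocate within $S$; then Myerson's threshold-payment characterization drives the payment of an agent in $S$ up to $B$, and nontriviality supplies a second agent whose positive payment breaks the budget. The only cosmetic difference is that the paper concludes that \emph{every} agent of $S$ must be paid $B$ (giving total payments at least $2B$), whereas you pay one agent $B$ and a second agent some positive amount --- both suffice, and your "price everyone outside $S$ out of the running" resolution of your stated obstacle is exactly the paper's construction.
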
~

\begin{proof} 
Assume for purpose of contradiction, there exists a budget feasible mechanism $f$ that guarantees a bounded approximation ratio.  Let $S$ be a feasible outcome, and  consider the bid profile in which all agents in $S$ declare positive cost $\epsilon<B/|S|$, and all other agents declare $B$.  Since the problem is nontrivial, the minimal cost of a feasible outcome different than $S$ (if it exists) exceeds the budget.  Since $f$ guarantees a bounded approximation ratio, it must allocate to $S$.  For agent $i \in S$, consider the cost vector $(c'_{i},c_{-i})$ with $c'_{i} = B- \epsilon(|S|-1)$.  Observe that $S$ remains a cost-feasible solution, and the only one in $\mathcal{F}$, and therefore $f(c'_{i},c_{-i}) =S$, which implies the threshold payment of agent $i$ is at least $c'_{i}$.  Since this holds for all agents in $S$, budget feasibility is contradicted.
\end{proof}~

\subsection{Characterizing Budget Feasible Mechanisms in Restricted Settings}
The characterization of budget feasible mechanisms with good approximation ratios naturally depends on the environment in which the mechanisms are implemented, or more concretely, the function we aim to optimize.  In characterizations, we often introduce additional restrictions on our mechanisms which we then use as guidelines in their design or for complete characterization.
We now consider mechanisms that respect the two additional conditions of anonymity~\cite{ADL09} and weak stability (similar to~\cite{DS08}).  Informally, a mechanism is weakly stable if an agent doesn't hurt the rest when reducing her cost, and anonymous if its allocation rule does not depend on the agents' identities.\newline  

\begin{definition}
An allocation rule $f$ satisfies anonymity if $i \in f(c_{i},c_{j},c_{-ij})$ implies $j \in f(c'_{i},c'_{j},c_{-ij})$ when $c'_{i}=c_{j},c'_{j}=c_{i}$.  
\end{definition}~

\begin{definition}
An allocation rule $f$ satisfies weak stability if for every $i,j \in f(c_{i},c_{-i})$, $c'_{i} \leq c_{i}$ implies $j \in f(c'_{i},c_{-i})$.
\end{definition}~

Note that for the class of symmetric submodular functions, the proportional share mechanism respects these conditions.  Furthermore, for symmetric submodular functions it seems quite reasonable that mechanisms with good approximation properties satisfy these conditions.  The obvious question here is whether all budget feasible mechanisms for the symmetric submodular case take the form of the proportional share allocation rule.\newline
\begin{theorem}
Let $f$ be a budget feasible mechanism that is anonymous and weakly stable, and let $S = f(c)$ for some bid profile $c$.  Then, for all $i\in S$ it must be  that $c_{i} \leq B/|S|$.
\end{theorem}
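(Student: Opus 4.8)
The plan is to argue by contradiction: let $m=\max_{i\in S}c_i$, attained at agent $i^{*}$, let $a=\min_{i\in S}c_i$, and write $k=|S|$. I will show $m\le B/k$, which immediately gives $c_i\le m\le B/k$ for every $i\in S$. Since the mechanism is truthful, Myerson's characterization applies, so $f$ is monotone and every winner is paid its threshold; individual rationality and budget feasibility then say that at \emph{any} bid profile the winners' thresholds are at least their bids and sum to at most $B$. The whole task is therefore to exhibit a single profile at which $k$ agents are simultaneously winning while each is paid at least $m$, forcing $km\le B$.

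The naive attempt---raise every winner's bid up to $m$ and invoke budget feasibility---does not work, and this is the \textbf{main obstacle}: our only safe ``move'' is \emph{lowering} a winner's bid (weak stability guarantees the other winners survive, and monotonicity keeps the lowered agent a winner), whereas \emph{raising} a bid can destroy the winning set. The resolution is to move only in the safe direction yet still certify a high threshold. First I would form the profile $c^{\dagger}$ obtained from $c$ by lowering every winner except $i^{*}$ down to the common value $a$; iterating weak stability over the pairs $(j,i^{*})$ and $(j,j')$ (and monotonicity for the lowered agent) shows that all of $S$, and in particular $i^{*}$ bidding $m$, remain winners in $c^{\dagger}$.

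Next consider the fully symmetric profile $c^{\ddagger}$ in which \emph{every} agent of $S$ bids $a$ while losers are unchanged. For a winner $p\neq i^{*}$, swapping the bids of $i^{*}$ and $p$ in $c^{\dagger}$ produces a profile in which $p$ bids $m$ while every other agent sits at exactly its $c^{\ddagger}$-bid. Anonymity applied to this swap, together with $i^{*}\in f(c^{\dagger})$, yields that $p$ wins while bidding $m$ against the environment $c^{\ddagger}_{-p}$; hence $p$'s threshold in $c^{\ddagger}$ is at least $m$, and by monotonicity $p$ is in fact a winner there since it bids $a\le m$. The case $p=i^{*}$ is immediate, since $i^{*}$'s environment in $c^{\dagger}$ already equals $c^{\ddagger}_{-i^{*}}$. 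Thus all $k$ agents of $S$ win in $c^{\ddagger}$ and each is paid at least $m$, so budget feasibility at $c^{\ddagger}$ gives $km\le B$, as desired.

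The step that requires real care---and the conceptual crux---is the environment-matching in the swap: one must verify that after exchanging $i^{*}$ and $p$ in $c^{\dagger}$ the remaining bids coincide \emph{as a labelled profile} with $c^{\ddagger}_{-p}$, which is precisely why the preliminary lowering to the common value $a$ is necessary (swapping directly in $c$ would leave an asymmetric environment and break the matching). Everything else is bookkeeping with Myerson's payment formula and the fact that all payments are nonnegative, so only the $k$ distinguished winners need to be counted in the budget inequality.
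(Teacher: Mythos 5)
Your proof is correct and takes essentially the same route as the paper's: lower all winners except the distinguished high-cost agent to the common minimum bid (weak stability plus monotonicity), then use the anonymity swap to certify that at the fully symmetric profile every winner's threshold payment is at least that agent's cost, so budget feasibility forces $|S|\cdot c_i \le B$. The only difference is presentational --- your explicit environment-matching swap from $c^{\dagger}$ is precisely the step the paper compresses into the terse claim that, by anonymity, all identically-bidding winners share the threshold lower bound.
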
~

\begin{proof}
Assume for purpose of contradiction that there is a bid profile $c=(c_{1},\ldots,c_{n})$, s.t. $f(c)=S$ and there is some $i \in S$ for which $c_i > B/|S|$.  
Let $c'$ be the bid profile in which all agents in $S \setminus \{i\}$ bid $c_{min} = \min_{j \in S}c_{j}$, and the rest bid as in $c$.  Since $f$ is weakly stable, we have that $S \subseteq f(c')$.  Let $c''$ the bid profile where $i$ bids $c_{min}$ as well, and the rest of the agents bid as in $c'$.  From monotonicity we have that $i$ is allocated, and again $S\subseteq f(c'')$.  We now claim that under the profile $c''$, the threshold price for each agent in $S$ is at least $c_{i} > B/|S|$.  To see this, observe that $i$'s threshold price must be at least $c_{i}$, since $i \in f(c')$.  Since $f$ is anonymous, and all agents in $S$ declare the same price, the threshold price for each agent in $S$ must also be at least $c_{i}$.  Thus, payments to agents in $S$ exceed the budget, contradicting budget feasibility.
\end{proof}~

\section{Discussion}\label{sec:disc}
The space of budget feasible mechanisms seems quite broad and invites for further investigation.  The richness of the submodular class implies there are many problems for which better approximation ratios are achievable.  We have briefly discussed some specific examples which include Knapsack, Matching, Coverage and the symmetric submodular case, and believe there are many more interesting problems to study.  Furthermore, we believe a better approximation ratio is achievable for the general case, and it may be possible to improve the analysis of the upper bound by showing tighter payment bounds.

While we have made a first step towards characterization by considering more restricted mechanisms, we believe a much more general characterization for budget feasible mechanisms awaits to be revealed. 
Finally, it would be interesting to further explore the lower bounds dictated by budget feasibility.  Here we've shown several lower bound which are independent of computational assumptions, and it would be interesting to extend these techniques. 

\section*{Acknowledgements}
The author wishes to thank Christos Papadimitriou for endless discussions and help.  To Dave Buchfuhrer, Iftah Gamzu, Arpita Ghosh, Mohammad Mahdian, George Pierrakos, Amin Saberi, Michael Schapira, Meromit Singer and Mukund Sundararajan the author wishes to express gratitude for meaningful discussions and valuable advice.

\end{document}